\documentclass[a4paper,UKenglish,cleveref, autoref, thm-restate]{lipics-v2019}


\usepackage{multirow}
\usepackage[ruled]{algorithm}
\usepackage[noend]{algorithmic}
\usepackage{amssymb, mathrsfs}
\usepackage{pdflscape}
\hideLIPIcs
\nolinenumbers

\bibliographystyle{plainurl}

\title{An integer programming formulation using convex polygons for the convex partition problem} 

\titlerunning{An IP formulation using convex polygons for the convex partition problem} 

\author{Hadrien Cambazard}{Univ. Grenoble Alpes, CNRS, Grenoble INP, G-SCOP, F-38000 Grenoble, France
}{hadrien.cambazard@grenoble-inp.fr}{}{}

\author{Nicolas Catusse}{Univ. Grenoble Alpes, CNRS, Grenoble INP, G-SCOP, F-38000 Grenoble, France
}{nicolas.catusse@grenoble-inp.fr}{}{Supported in part by the ANR Project DISTANCIA (ANR-17-CE40-0015) operatedby the French National Research Agency (ANR).}

\authorrunning{H. Cambazard and N. Catusse} 

\Copyright{H. Cambazard and N. Catusse} 

\ccsdesc[100]{Theory of computation~Computational geometry} 

\keywords{convex partition, integer programming, geometric optimization} 

\category{} 

\relatedversion{} 

\supplement{}


\acknowledgements{}



\EventEditors{}
\EventNoEds{2}
\EventLongTitle{The 37th International Symposium on Computational Geometry (SOCG 2021)}
\EventShortTitle{SOCG 2021}
\EventAcronym{SOCG}
\EventYear{2021}
\EventDate{June 7--11, 2021}
\EventLocation{Buffalo, USA}
\EventLogo{}
\SeriesVolume{}
\ArticleNo{}

\begin{document}

\maketitle

\begin{abstract}
A convex partition of a point set $P$ in the plane is a planar partition of the convex hull of $P$ with empty convex polygons or internal faces whose extreme points belong to $P$. In a convex partition, the union of the internal faces give the convex hull of $P$ and the interiors of the polygons are pairwise disjoint. Moreover, no polygon is allowed to contain a point of $P$ in its interior. The problem is to find a convex partition based on the minimum number of internal faces. The problem has been shown to be NP-Hard and was recently used in the CG:SHOP Challenge 2020. We propose a new integer linear programming (IP) formulation that considerably improves over the existing one. It relies on the representation of faces as opposed to segments and points. A number of geometric properties are used to strengthen it.  Data sets of 100 points are easily solved to optimality and the lower bounds provided by the model can be computed up to 300 points.
\end{abstract}

Let $P$ be a set of points in the plane and $n = |P|$ the number of points. 
Let's also $H(P)$ denote the convex hull of $P$ and $I(P)$ the set of internal points of $P$ \emph{i.e} the subset of points that are not vertices of the convex hull $H(P)$. A simple polygon is empty if it does not contain a point of $P$ in its interior. A convex partition of $P$ is a planar subdivision of $H(P)$ into non overlapping empty convex polygons whose vertices are the points of $P$. The minimum convex partitions (MPC) problem is to find the convex partition minimizing the total number of empty convex polygons. In the remaining of the paper, we refer to such empty convex polygons as convex faces or simply \textbf{faces}. Figure \ref{fig_example} illustrates the input data and its convex hull (\ref{fig_instance}) and two feasible solutions figures (\ref{fig_sol}) and (\ref{fig_opt_sol}). Solution (\ref{fig_opt_sol}) is using 5 faces which is optimal for this input data $P$.  A practical application of this problem is reported in the area of network design \cite{Knauer2006ApproximationAF}. Additionally, the authors of \cite{BarbozaSR19} argue that decomposition into polygons plays a key role in algorithm design where divide and conquer schemes take advantage of such decomposition to reduce the problem's size.\\  

\begin{figure}
     \centering
     \begin{subfigure}[b]{0.25\textwidth}
         \centering
         \includegraphics[width=\textwidth]{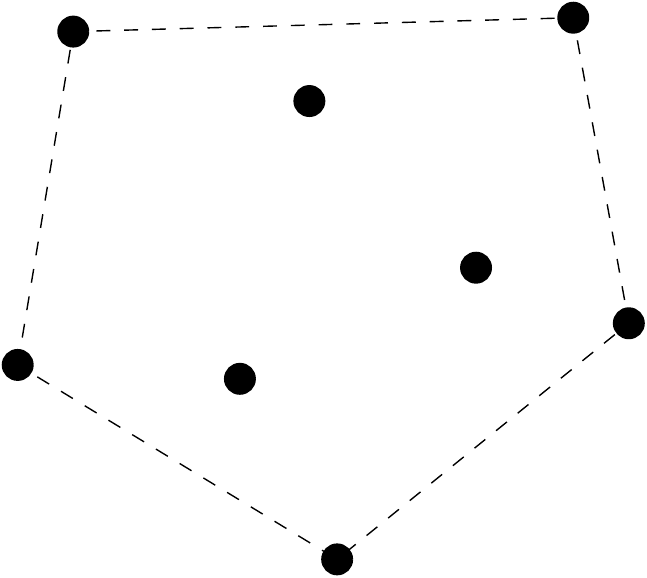}
         \caption{Point set $P$ with convex hull in dashed line}
         \label{fig_instance}
     \end{subfigure}
     \hfill
     \begin{subfigure}[b]{0.25\textwidth}
         \centering
         \includegraphics[width=\textwidth]{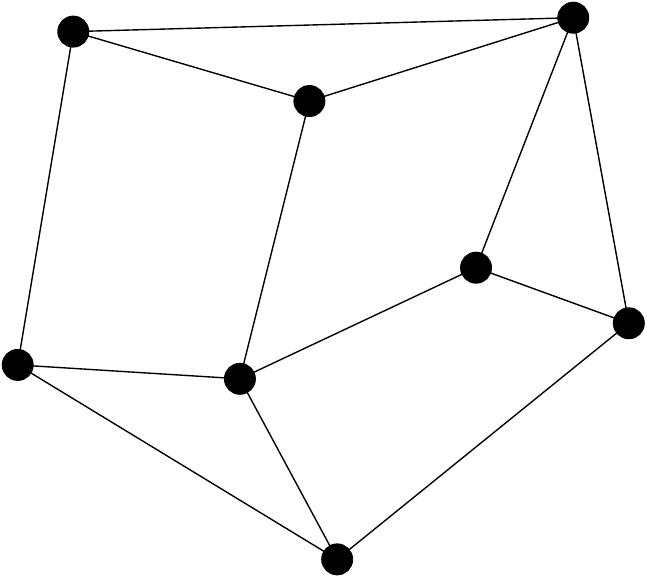}
         \caption{A feasible solution}
         \label{fig_sol}
     \end{subfigure}
     \hfill
     \begin{subfigure}[b]{0.25\textwidth}
         \centering
         \includegraphics[width=\textwidth]{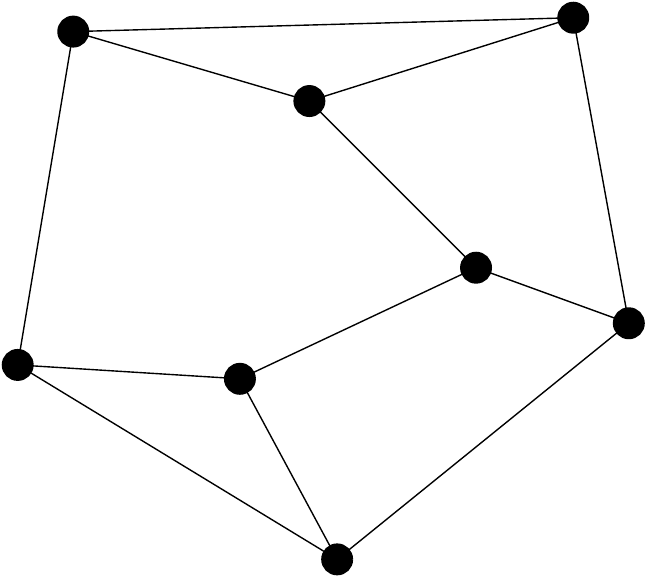}
         \caption{The optimal solution}
         \label{fig_opt_sol}
     \end{subfigure}
        \caption{Example of an instance and two solutions}
        \label{fig_example}
\end{figure}

This problem was the subject of the 2020 Computational Geometry Challenge \cite{demaine2020computing}, which proposed a number of instances of size $n \in [10; 1000000]$ to be solved with no time limit \cite{cg_shop-challenge-page}.  A proof of NP-hardness for the case of planar point sets in not necessarily general position (three points can be collinear) has been announced by N. Grelier in \cite{grelier2020hardness}. The complexity of the MCP for points in general position (no three points are collinear) is still open. If the point sets can be decomposed into a constant number of convex layers, a polynomial-time algorithm is given by Fevens, Meijer and Rappaport in \cite{Fevens2001MinimumCP}.
For instance in general position, Knauer and Spillner gave a 3-approximation algorithm that runs in $O(n\log n)$, and a $\frac{30}{11}-$approximation of complexity $O(n^2)$ in \cite{Knauer2006ApproximationAF}.

The worst-case bound for a set of $n$ points in general position has also been studied, see \cite{HOSONO20091714,Knauer2006ApproximationAF,lomeliharo2012minimal}. J. Urrutia conjecture that this number is at most $n+1$ in \cite{urrutia1998}. At the present time, the best known upper bound is $\frac{4}{3}n-2$ in \cite{sakai2019convex}. Conversely, the best lower bound is $\frac{12}{11}n-2$ in \cite{Lopez2013}.

For the exact resolution of the problem, in addition to the model of Barboza and al. \cite{BarbozaSR19} that we detail below, Da Wei Zheng et al. (winners of the 2020 CG challenge) proposed the use of a SAT model for the exact resolution of instances with $n \le 50$ \cite{zheng_et_al2020}.

We propose a new formulation in integer linear programming for the problem of minimum convex partitioning. It involves an exponential number of variables but $O(n^2)$ constraints. It considerably improves the formulation proposed by Barboza and al. \cite{BarbozaSR19} and allows to entirely close the corresponding benchmark \cite{mcpp-instances-page}.
The key idea of our approach is to reason with convex faces rather edges and points.

Let's first introduce the notations as well as the formulation given in \cite{BarbozaSR19}. Let $\overline{ij}$ the line segment between two distinct points $i, j \in P$ and $S$ the set of all line segments of $P$. Let $E(P)$ the set of edges corresponding to $S$, i.e $E(P) = \{\{i,j\} | \overline{ij} \in S\}$. So the graph $G = (P, E(P))$ is a complete graph induced by $P$. The formulations presented below require to distinguish two \emph{sides} to an edge $\{i,j\}$ namely the \emph{left} and \emph{right} side.
For this, we set an orientation to the segments of $E(P)$ and define $A(P)$ the set of arcs $(i,j)$ such that $\{i,j\} \in E(P)$ and the two points $i$ and $j$ are sorted by x-coordinate i.e $x_i < x_j$ then y-coordinate in case of a tie. Then we can define the orientation of another point $k \in P$ with respect to the arc $(i,j)$ by the cross product: $k$ is said to be to the right of $(i,j)$ if $\vec{ij} \times \vec{ik} \geq 0$ (or $k \in CCW(ij)$), and to the left if $\vec{ij} \times \vec{ik} <0$ ($k \in CW(ij)$). Thereafter, we will sometimes refer either to the edge $\{i,j\}$ or to the arc $(i,j)$, notice that the two sets are in bijection.
A face $f$ can be seen as a sub-graph $G_f=(V_f, E_f)$ of $G$ and we refer to $V_f$ and $E_f$ respectively as the vertices and edges of the face $f$.
Consider a convex face $f$ and an edge $\{i,j\} \in E_f$, $f$ is said to be to the right (resp. left) of the arc $(i,j) \in A(P)$ if for any vertex $k \in V_f$, $k$ is at the right (resp. left) of $(i,j)$. Since $f$ is convex, every vertex of $V_f$ share the same orientation with respect to $(i, j)$. Finally, for each edge $\{i,j\}$, we define the set $R(i,j)$ (resp. $L(i,j)$) the set of all convex faces to the right (resp. left) of arc $(i,j)$.
We denote by $F$ the set of all possible empty convex faces for $P$, see Fig. \ref{enum_conv} for an example.

\begin{figure}
     \centering
     \includegraphics[scale=0.7]{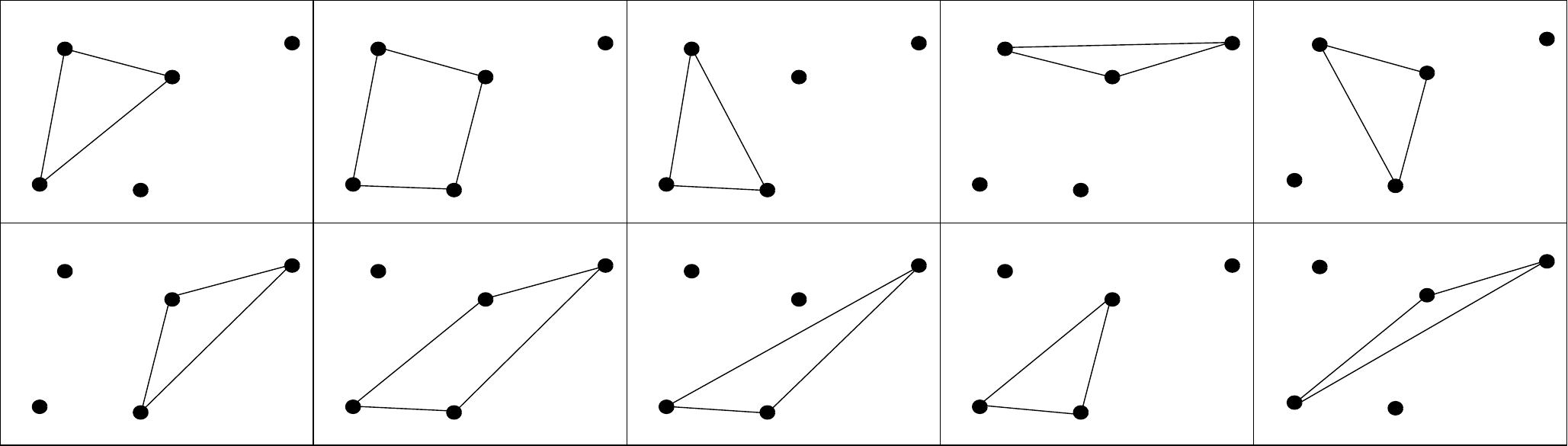}
        \caption{All possible faces $F$ for an instance of $P$ with $n=5$}
        \label{enum_conv}
\end{figure}
 
Let us first recall the formulation proposed by Barboza and al. \cite{BarbozaSR19}. This formulation addresses the problem where the set $P$ is in a general position, i.e. with no three points being collinear. It is a compact formulation with a boolean variable per edge which indicates if this edge is selected in the partition.
Let $S^C \subseteq S$ the set of pairs of crossing line segment and $E^C \subseteq E(P)$ the associated edges.

\begin{center} \begin{tabular}{cc}  & $
\begin{array}{llclllr|}
z^* = \textrm{minimize}& \sum\limits_{(ij) \in E(P)} x_{ij} & & (1)\\
   \mbox{s.t.}& x_{ij} + x_{kl} \leq 1 & \forall \{\{i,j\}, \{k,l\}\} \in E^C & (2) \\
   & x_{ij} = 1 & \forall \{i,j\} \in H(P) & (3) \\
        & \sum\limits_{k \in L(i,j) \cap P} x_{ik} \geq 1 & \forall (i,j) \in A(P), i \in I(P) & (4) \\
        & \sum\limits_{j \in P} x_{ij} \geq 3 & \forall i \in P & (5) \\
        & x_{ij} \in \{0,1\} & \forall \{i,j\} \in E(P) & (6) \\
\end{array} $\\ 
& \\
\end{tabular} \end{center}

The objective function (1) is expressed in terms of number of edges, which is equivalent to the expression in terms of number of faces for this problem by Euler's formula.
Planarity is ensured by the constraint (2). Constraint (3) imposes the selection of the edges of the convex hull $H(P)$ in the solution. Constraint (4) ensures the respect of the local convexity for each internal point $i \in I(P)$, by forcing the selection of an edge to the left of each possible arc $(i,j)$. The last constraint (5) is redundant with the previous convexity constraint (4), and is there to strengthen the formulation and improve the linear relaxation of the model. 

A benchmark is introduced in  \cite{BarbozaSR19} with two sets of instances, one with a number of points $n \le 50$ (generated by keeping only instances for which the model above is able to prove optimality in 20 minutes in order to have a known set of optimal solutions), and the other with a larger number of points: $55 \le n \le 110$.

The paper is organized as follows. Section \ref{geomsection} presents a number of geometric results underlying the model proposed. Section 2 gives the novel integer programming formulation and Section 3 reports some experimental results. 


\section{Geometric observations related to convex faces}
\label{geomsection}
We review a number of observations that will help to establish and strengthen the formulation proposed Section \ref{formulation}. Note first that a very simple lower bound on the number of faces can be obtained using Euler's result. Since the edges of a convex partition form a planar graph, the well-known Euler formula holds and state that $\mathrm{f} = \mathrm{e} - \mathrm{v} + 1$ (without the external face) in any convex partition ($\mathrm{f}$ is the number of faces, $\mathrm{e}$ the number of edges and $\mathrm{v}$ is the number of vertices). Moreover, in a convex partition, the convexity constraint requires that the angles between the incident edges of an internal point be less than or equal to 180 degrees. So the internal points can have a degree two in the collinear case, or at least three if they are not collinear with two other points. The points of $H(P)$ have a degree of at least two with their neighbors in the convex hull. Some vertices can be shown to have a greater degree (see Section \ref{genericcut} and Corollary \ref{hourglass_H}) and for each point $p \in P$, we can consider a lower bound $d(p)$ on its degree. Using Euler's formula, we can compute the following lower bound:

$$\mathrm{f} \ge \frac{1}{2}( \sum_{p \in P} d(p)) - n + 1$$

We refer to this lower bound as the Euler bound.


\subsection{Geometric cuts}
\label{genericcut}
In general position, Knauer and al. in \cite{Knauer2006ApproximationAF} proposed an interesting lower bound on the degrees of the vertices of $H(P)$. They first compute the convex hull of the internal points $H(I(P))$. According to the geometrical configuration of these points, they determine whether one or two edges are needed to connect them to the vertices of the convex hull $H(P)$. In addition, they present examples showing that this bound is almost tight.

We propose a generalization of this bound.
Let $\mathscr{C}$ a convex subset of $\mathbb{R}^2$, $P^{\mathscr{C}}$ the points of $P$ inside $\mathscr{C}$ and $\overline{P^{\mathscr{C}}} = P \setminus P^{\mathscr{C}}$ its complement. We will determine $d(P^{\mathscr{C}})$ a lower bound of the number of edges between $P^{\mathscr{C}}$ and $\overline{P^{\mathscr{C}}}$.

\begin{figure}[h]
     \centering
     \includegraphics[scale=0.7]{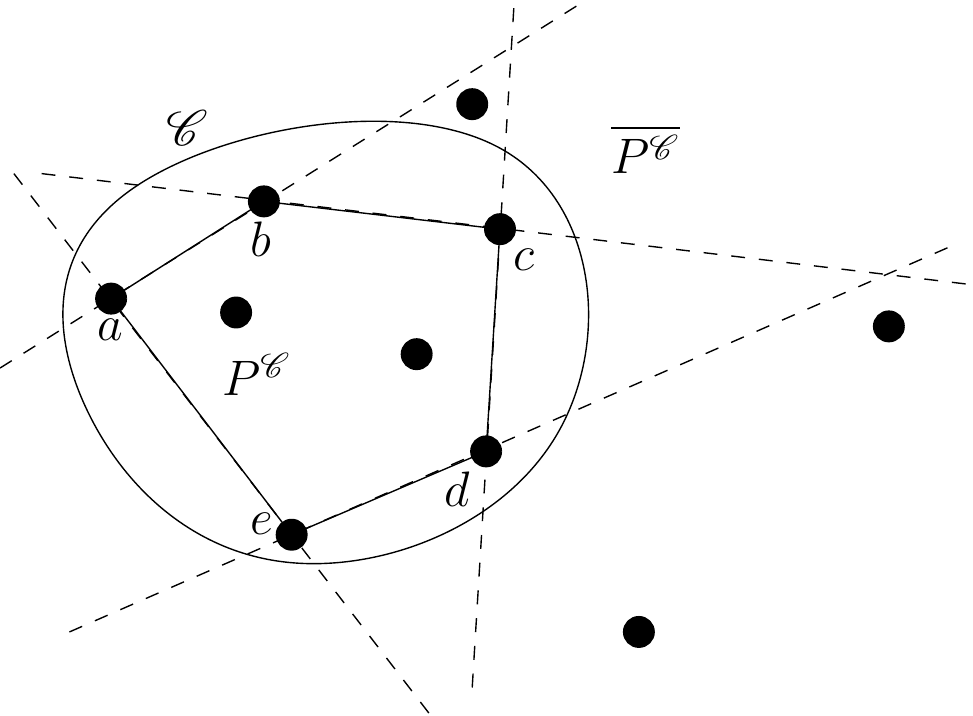}
        \caption{An example of partition with a convex $\mathscr{C}$}
        \label{fig_cut_gene}
\end{figure}

Let $P'^{\mathscr{C}}$ the subset of point $p \in H(P^{\mathscr{C}})$ such that there exists a line segment $\overline{pp'}$ with a point $p' \in \overline{P^{\mathscr{C}}}$ that does not cross $H(P^{\mathscr{C}})$, i.e. such that $\overline{pp'} \cap H(P^{\mathscr{C}}) = p$. In terms of visibility graph, if we consider that $H(P^{\mathscr{C}})$ is an obstacle, for each point $p \in P'^{\mathscr{C}}$, there exists an edge in the visibility graph between $p$ and a point $p' \in \overline{P^{\mathscr{C}}}$.

To deal with collinearity, we now remove from $H(P^{\mathscr{C}})$ the vertices of the convex hull which are collinear with their two neighbors. As in \cite{Knauer2006ApproximationAF}, we classify the vertices of $P'^{\mathscr{C}}$ into distinct sets. 
Let $v$ be a point of $P'^{\mathscr{C}}$ and $u$ and $w$ its neighbors in the convex hull $H(P^{\mathscr{C}})$. Let $\mathcal{H}_w$ (resp. $\mathcal{H}_u$) be the half-plane bounded by the straight line passing through $v$ and $u$ (resp. $w$) and not containing $w$ (resp. $u$). The vertex $v$ is of type (1) if $\mathcal{H}_w \cap \mathcal{H}_u$ contains at least one point of $\overline{P^{\mathscr{C}}}$, type (2) if $v \notin H(P)$ and $\mathcal{H}_w \cap \mathcal{H}_u$ contains no point of $\overline{P^{\mathscr{C}}}$, and type (3) if $v \in H(P)$.
Let $n_1, n_2$ and $n_3$ respectively the number of points of type (1), (2) and (3). Fig.\ref{fig_cut_gene} shows an example of partition with a convex $\mathscr{C}$. The points $\{a, b, c, d, e\}$ belong to the convex hull $H(P^{\mathscr{C}})$.
Point $a \notin P'^{\mathscr{C}}$ because there is no line segment between $a$ and a point of $\overline{P^{\mathscr{C}}}$. So $P'^{\mathscr{C}}$ is the set $\{b, c, d, e\}$.
The point $d$ is of type (1), $c$ is of type (2), and $b, e$ are of type (3) because they belong to $H(P)$.

\begin{lemma}
In any solution, the number of edges between $P^{\mathscr{C}}$ and $\overline{P^{\mathscr{C}}}$ is at least  $n_1 + 2 n_2 + n_3$, $d(P^{\mathscr{C}}) \ge n_1 + 2 n_2 + n_3$.
\label{cut_gene}
\end{lemma}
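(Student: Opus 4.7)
The plan is to establish the bound vertex by vertex: for each $v \in P'^{\mathscr{C}}$ I will exhibit a lower bound on the number of partition edges incident to $v$ whose other endpoint lies in $\overline{P^{\mathscr{C}}}$, equal to $1$ for types (1) and (3), and $2$ for type (2). Summing over $v \in P'^{\mathscr{C}}$ then yields the inequality, since every crossing edge has a single endpoint in $P^{\mathscr{C}}$ and so is counted at most once.

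The central geometric observation that drives everything is the following: a partition edge from $v \in H(P^{\mathscr{C}})$ to another point $p \in P^{\mathscr{C}}$ must have its direction from $v$ inside the interior wedge $\angle uvw$ of $H(P^{\mathscr{C}})$ at $v$, because convexity of $H(P^{\mathscr{C}})$ forces the segment $\overline{vp}$ to lie entirely within that hull. Contrapositively, any edge incident to $v$ whose direction leaves this interior wedge must be a crossing edge.

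For types (1) and (2), where $v \notin H(P)$, all $360^\circ$ around $v$ are covered by faces of the partition, each contributing an angle $\le 180^\circ$ at $v$. Since the interior wedge has angle $\theta < 180^\circ$, the complementary exterior region has angle $360^\circ - \theta > 180^\circ$, so at least one additional edge from $v$ must subdivide it, and by the observation above this edge is crossing; this takes care of type (1). For type (2) the extra hypothesis that $\mathcal{H}_u \cap \mathcal{H}_w$ contains no point of $\overline{P^{\mathscr{C}}}$ forces any crossing edge from $v$ to lie in one of the two side wedges bounded by the extensions of $vu$ and $vw$. I will then argue angularly that a single such edge, at angle $\alpha_1$ strictly inside one side wedge (hence $|\alpha_1 - 0^\circ| < 180^\circ$ and $|\alpha_1 - \theta| < 180^\circ$ measured the other way), leaves a sub-region straddling the exterior wedge with angle $>180^\circ$; closing this sub-region requires a second crossing edge, which has no option but to land in the opposite side wedge (since the exterior wedge is empty of outside points). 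This gives the count $2$.

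For type (3), where $v \in H(P)$, I will first show that the interior wedge of $H(P)$ at $v$ strictly contains the interior wedge of $H(P^{\mathscr{C}})$ at $v$: otherwise every direction pointing outside $H(P^{\mathscr{C}})$ from $v$ would also point outside $H(P)$, contradicting $v \in P'^{\mathscr{C}}$ (since visibility witnesses must lie in $H(P)$). The resulting annular wedge, strictly inside $H(P)$ but outside $H(P^{\mathscr{C}})$, is therefore covered by faces of the partition, so at least one partition edge from $v$ has its direction inside this annular wedge, and by the central observation its other endpoint is in $\overline{P^{\mathscr{C}}}$. The step I expect to be most delicate is the type (2) analysis, as it requires ruling out configurations where a single side-wedge edge together with a collinear extension could seemingly close the exterior region; the hypothesis that collinear vertices of $H(P^{\mathscr{C}})$ have been removed is what makes $\theta < 180^\circ$ strict, and this is the ingredient I will need to invoke to rule out the borderline case.
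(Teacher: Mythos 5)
Your proof is correct and follows essentially the same route as the paper's: a per-vertex count of crossing edges (one for types (1) and (3), two for type (2)), driven by the observation that any partition edge from $v$ staying inside $P^{\mathscr{C}}$ must point into the interior wedge of $H(P^{\mathscr{C}})$ at $v$, so the reflex exterior angle forces edges to $\overline{P^{\mathscr{C}}}$. You simply supply the angular bookkeeping that the paper's four-sentence proof leaves implicit, including the borderline collinear case for type (2) and an annular-wedge argument for type (3) where the paper instead just invokes the mandatory hull edges of $H(P)$.
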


\begin{proof}
A vertex $v \in P'^{\mathscr{C}}$ have two neighbors in $H(P^{\mathscr{C}})$ and the triplet forms an angle of more than 180 degrees. Since any other point in $P^{\mathscr{C}}$ can satisfy the convexity constraint, in any solution, there must be at least an edge with a point of $\overline{P^{\mathscr{C}}}$. For type (1), there is a point $p' \in \overline{P^{\mathscr{C}}}$ in $H_w \cap H_u$ such that the edge $\{v,p'\}$ can satisfy the convexity constraint. For type (2), at least two edges are needed. Vertices of $H(P)$ (type (3)) don't have to satisfy the convexity constraint, and need only the edge of $H(P)$ between $P^{\mathscr{C}}$ and $\overline{P^{\mathscr{C}}}$, which are mandatory.
\end{proof}

A simple way to divide $P$ into two convex partitions is to use a straight line to partition the points of $P$. We propose a lower bound on the minimum number of edges that cross a straight line in any solution.


\begin{corollary}
Let $l$ a straight line that divides $P$ into two part $P_1$ and $P_2$, and let $H(P_1)$ and $H(P_2)$ their respective convex hulls (see Fig. \ref{hourglass_cuts}).
Let $P_1^l$ (resp. $P_2^l$) the set of point $p \in H(P_1)$ (resp. $H(P_2)$) such that there exists a line segment with a point $p' \in P_2$ (resp. $P_1$) and $\overline{pp'}$ does not cross $H(P_1)$ (resp. $H(P_2)$), i.e. $\overline{pp'} \cap H(P_1) = p$ (resp. $\overline{pp'} \cap H(P_2) = p$). In any solution, there is at least $\max(d(P_1^l), d(P_2^l))$ edges that cross the line $l$.
\label{hourglass}
\end{corollary}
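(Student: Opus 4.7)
The plan is to derive this corollary as a double application of Lemma \ref{cut_gene}, using the two open half-planes bounded by $l$ as the convex set $\mathscr{C}$. The crucial observation is that edges of a convex partition that cross $l$ are in bijection with edges joining a point of $P_1$ to a point of $P_2$: any segment between $P_1$ and $P_2$ must cross $l$ (since $l$ separates them), and conversely a segment between two points on the same side cannot cross $l$. So the quantity we want to lower-bound is exactly the cardinality of the cut between $P_1$ and $P_2$.

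First I would take $\mathscr{C} = \mathcal{H}_1$, the closed half-plane bounded by $l$ that contains $P_1$. Then $P^{\mathscr{C}} = P_1$ and $\overline{P^{\mathscr{C}}} = P_2$, and I would check that the set $P'^{\mathscr{C}}$ defined in the statement of Lemma \ref{cut_gene} coincides with $P_1^l$ as defined in the corollary: both are the vertices $p$ of $H(P_1)$ admitting a segment to some $p' \in P_2$ that touches $H(P_1)$ only at $p$. Lemma \ref{cut_gene} then yields at least $d(P_1^l)$ edges between $P_1$ and $P_2$, hence at least $d(P_1^l)$ edges crossing $l$. Applying the same argument with $\mathscr{C} = \mathcal{H}_2$ gives at least $d(P_2^l)$ such edges, and since both bounds hold simultaneously for the same set of crossing edges, their maximum is a valid lower bound.

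The main subtlety — and what I would treat as the most delicate step — is to make sure that $P'^{\mathscr{C}}$ in Lemma \ref{cut_gene} really does coincide with $P_1^l$ after the collinearity cleanup (removing vertices of $H(P_1)$ that are collinear with their two neighbors). Since the corollary inherits the same hypotheses on $d(\cdot)$, the types (1), (2), (3) used to define $d(P^{\mathscr{C}})$ transfer verbatim to $d(P_1^l)$ and $d(P_2^l)$, so no new case analysis is required; one only needs to confirm that the visibility condition ``$\overline{pp'} \cap H(P_1) = \{p\}$'' phrased in the corollary is exactly the visibility condition used in Lemma \ref{cut_gene}. Beyond that, the argument is a direct specialization and the remaining work is purely notational.
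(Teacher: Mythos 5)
Your proposal is correct and matches the paper's proof, which likewise just applies Lemma \ref{cut_gene} once with $P_1$ and once with $P_2$ and takes the maximum of the two bounds. Your additional verification that cut edges coincide with $l$-crossing edges and that $P'^{\mathscr{C}}$ coincides with $P_1^l$ only makes explicit what the paper leaves implicit.
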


 
\begin{proof}
We simply apply the Lemma \ref{cut_gene} with $P_1$ and with $P_2$, and take the maximum of the two lower bounds.
\end{proof}

\begin{figure}
\centering
 \begin{subfigure}[b]{0.48\textwidth}
         \centering
          \includegraphics[scale=0.6]{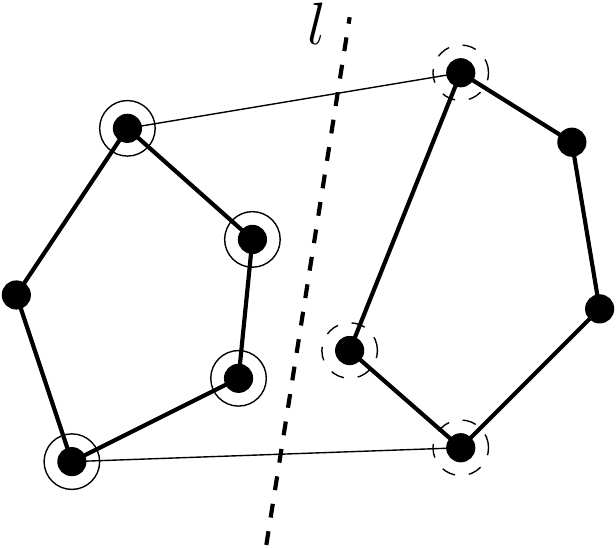}
          \caption{}
          \label{fig_hourglass}
\end{subfigure}
%
\hfill
 \begin{subfigure}[b]{0.48\textwidth}
\centering
      \includegraphics[scale=0.6]{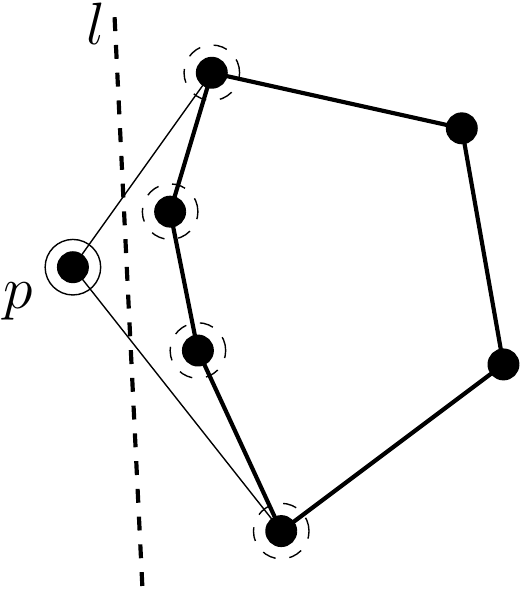}
      \caption{}
      \label{fig_hourglass_p}
\end{subfigure}
   \caption{Two partition of $P$ by a line $l$, $P_1^l$ is circled solid and $P_2^l$ is circled dashed}
   \label{hourglass_cuts}

\end{figure}


\begin{corollary}
\label{hourglass_H}
With Corollary \ref{hourglass}, we obtain a minimum bound on the degree of the points belonging to the convex hull $H(P)$. Indeed, by definition of the convex hull, for a point $p \in H(P)$ we can always partition $P$ with a straight line into two sets $\{p\}$ and $\{P \setminus p\}$, so $d(p)$ is obtained by applying the Lemma \ref{cut_gene} on $P\setminus \{p\}$.
\end{corollary}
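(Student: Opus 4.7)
\medskip
\noindent\emph{Proof plan.} The plan is to reduce the statement to a direct application of the preceding results by exhibiting a valid line-partition that isolates the vertex $p$. First, because $p \in H(P)$, by definition of the convex hull there exists a supporting line through $p$ such that every other point of $P$ lies in one closed half-plane bounded by this line. Translating this line by an arbitrarily small amount in the direction of $p$ produces a straight line $l$ that strictly separates $P$ into the two sets $P_1 = \{p\}$ and $P_2 = P \setminus \{p\}$. (If several hull vertices are collinear with $p$ on the supporting line, it suffices to rotate the line very slightly around $p$ before translating, using the strict convexity of the hull at $p$ after removing such collinear neighbors.)

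Next, I would invoke Corollary \ref{hourglass} with the partition $(P_1, P_2)$ just constructed. Every edge that crosses $l$ in any convex partition of $P$ must have $p$ as one endpoint, since $p$ is the unique element of $P_1$. Hence the number of edges incident to $p$ is at least the lower bound $\max(d(P_1^l), d(P_2^l))$ given by the corollary, which in turn is obtained by applying Lemma \ref{cut_gene} to the convex set $\mathscr{C}$ whose intersection with $P$ is $P_2 = P \setminus \{p\}$ (for instance, a suitable half-plane bounded by $l$). This yields the claimed lower bound on $d(p)$.

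There is essentially no hard step: the result is a specialization of Corollary \ref{hourglass} and the only thing to verify is that a strictly separating line genuinely exists for any $p \in H(P)$, which is immediate from the definition of the convex hull together with the small tilt/translation argument above to handle the degenerate case where several hull vertices are collinear with $p$.
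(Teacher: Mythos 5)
Your proposal is correct and follows the same route as the paper: the paper's own justification is exactly the observation that a line through (or just past) a hull vertex $p$ separates $\{p\}$ from $P\setminus\{p\}$, after which Lemma \ref{cut_gene} applied to $P\setminus\{p\}$ gives the bound. Your added details — the supporting-line/translation construction and the remark that every edge crossing $l$ must be incident to $p$ — merely make explicit what the paper leaves implicit.
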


Fig. \ref{fig_hourglass_p} show an example of this lower bound on the degree of point in $H(P)$. The leftmost point $p$ has at least 4 edges in any solution, so $d(p) = 4$.


\subsection{Mandatory face}


Recall that F is the set of all faces of $P$.
Among this set, some convex face must necessarily be taken in any optimal solution. We can therefore directly select them in the solution.

\begin{remark}
\label{mandatory_faces}
If a point of $\mathbb{R}^2$ inside the convex hull $H(P)$ is covered by only one convex $f \in F$, $f$ must be in the solution.
\end{remark}

Indeed, since any point of $\mathbb{R}^2$ inside the convex hull $H(P)$ must belong to a face of the partition solution, the only face $f$ that contains this point must be part of the solution.

An example of a mandatory face is shown in Fig. \ref{mandatory_edge}. The edge $\{i,j\}$ belongs to the convex hull, so it is mandatory. Let us now observe that taking the line passing through $\{j,k\}$, the side containing the $\{i,j\}$ edge contains no other point than $i$. Similarly, taking the line passing through $\{i, k\}$, the side containing the edge $\{i,j\}$ does not contain any other point that $j$. Therefore, there is no other convex face containing the points in $\mathbb{R}^2 \cap \{i,j\} \setminus \{i\}, \{j\}$ than the face defined by vertices $\{i,j,k\}$.

\begin{figure}[h]
\centering
      \includegraphics[scale=0.6]{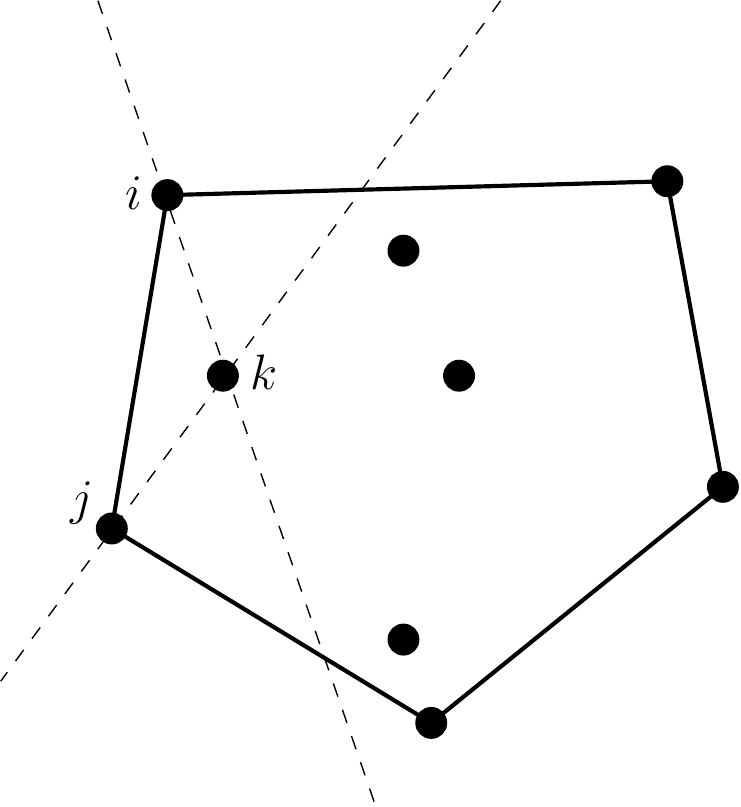}
   \caption{Mandatory face $\{i, j, k\}$}
   \label{mandatory_edge}
\end{figure}


\subsection{Dominated face}
Conversely, among the set $F$, one can determine a set of dominant polygons which are sufficient to obtain an optimal solution. We are now going to describe several sets of faces that can be eliminated from the faces to be considered without losing any optimal solution. 

Two faces $f_1$ and $f_2$ are said to be adjacent along an edge $e$ if $E_{f_1} \cap E_{f_2} = e$. The union $f_1 \cup f_2$ of two adjacent faces $f_1$ and $f_2$ is a polygon defined by the union of their surfaces and the removal of the common edge $e$ i.e  $E_{f_3} = E_{f_1} \cup E_{f_2} \setminus \{e\}$. The resulting polygon $f_3$ is not necessarily convex.

\begin{lemma}
\label{domi_face}
A face $f$ is dominated if for all $f'$ adjacent to $f$, $f \cup f'$ is convex (is in $F$).
\end{lemma}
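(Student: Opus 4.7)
The plan is an exchange argument: I would show that any convex partition containing $f$ can be strictly improved, so $f$ belongs to no optimal solution and can be safely discarded from $F$.

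Suppose for contradiction that $f$ appears in some optimal solution $S$, and that $|S|\geq 2$ (the case $|S|=1$ is degenerate since then $f=H(P)$ has no adjacent face and the hypothesis of the lemma is vacuous). Then $f$ has at least one edge $e$ interior to $H(P)$, and $e$ is shared with exactly one other face $f' \in S$. Since $S$ is a convex partition, $f'$ is a convex empty polygon with vertices in $P$, so $f'\in F$, and since $E_f\cap E_{f'}=\{e\}$ the faces $f$ and $f'$ are adjacent in the sense of the paper. By hypothesis, $f\cup f'$ is then convex and lies in $F$.

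I would then form $S'$ from $S$ by deleting $f$ and $f'$ and inserting $f\cup f'$ in their place. The union of the two merged faces is exactly $f\cup f'$, so $S'$ still covers $H(P)$ and its faces are pairwise interior-disjoint. Each face of $S'$ is convex; each is empty since $f$ and $f'$ were empty and the relative interior of the shared edge $e$ contains no point of $P$ (being an edge of a face in $F$). Hence $S'$ is a valid convex partition with $|S'|=|S|-1$, contradicting the optimality of $S$. Thus no optimal solution uses $f$, so $f$ is dominated.

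The only subtlety I anticipate concerns the endpoints of $e$. If one of them, say $a$, happens to be collinear with its two neighbors on the boundary of $f\cup f'$, then $a$ becomes a degree-$2$ vertex of the merged polygon. This is not an obstacle: the vertex set of $f\cup f'$ still contains $a$, and $a$ remains a genuine vertex of the planar subdivision $S'$ because it carries incident edges from the other faces of $S$ on the opposite side. So the merge step produces a legitimate face of $F$ even in the presence of collinearities.
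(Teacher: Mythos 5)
Your exchange argument (merge $f$ with its neighbour $f'$ across an interior edge to obtain a strictly smaller partition, contradicting optimality) is exactly the intended justification; the paper states Lemma~\ref{domi_face} without an explicit proof, but the surrounding discussion (``a solution without this edge is always better'') is precisely this argument. Your treatment of the degenerate one-face case and of collinear endpoints is a sensible extra precaution, and the proof is correct.
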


\begin{corollary}
\label{domi_edge}
If an edge belongs only to dominated faces, this edge is itself dominated and does not belong to an optimal solution.
\end{corollary}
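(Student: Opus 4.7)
The plan is to argue by contradiction, reducing the claim to a single exchange step that directly invokes Lemma \ref{domi_face}. Let $e$ be an edge such that every face $f \in F$ containing $e$ (i.e.\ with $e \in E_f$) is dominated. I would first note that edges of $H(P)$ are forced into every solution by constraint (3) and have no face of $F$ on their outer side, so the corollary is meaningful only for interior edges; equivalently, the claim to be proved is that there exists an optimal solution avoiding $e$.

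Suppose for contradiction that some optimal solution $S^*$ uses $e$. Because $e$ is interior to $H(P)$ and $S^*$ is a planar convex partition, exactly two faces of $S^*$, call them $f_1$ and $f_2$, meet along $e$. Both $f_1$ and $f_2$ are empty convex polygons with vertices in $P$, hence both belong to $F$, and both contain $e$, so by hypothesis they are dominated. Since $f_2$ is adjacent to $f_1$ along $e$ in the sense of the definition preceding Lemma \ref{domi_face}, applying the lemma to $f_1$ with the adjacent face $f' := f_2$ yields that $f_1 \cup f_2$ is convex and belongs to $F$.

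The main step is then the exchange: replace the two faces $f_1, f_2$ in $S^*$ by the single face $f_1 \cup f_2$. I would check routinely that this produces a valid convex partition $S'$ of $P$ — the covered region of $H(P)$ is unchanged, interiors stay pairwise disjoint, every face is empty (because $f_1 \cup f_2 \in F$), and no vertex of $P$ is lost. But then $|S'| = |S^*| - 1$, contradicting the optimality of $S^*$, so no optimal solution can use $e$.

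The proof is essentially immediate because all the geometric content has been pushed into Lemma \ref{domi_face}; the only real care needed — and the only place the argument could misfire — is the implicit restriction to interior edges, which is necessary to guarantee the two-sided adjacency along $e$ on which the exchange depends.
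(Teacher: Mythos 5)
Your exchange argument is correct and is exactly the reasoning the paper leaves implicit (the corollary is stated without a proof): the two faces of any partition meeting along $e$ are both dominated, Lemma~\ref{domi_face} lets you merge them into a single convex face of $F$, and the strictly smaller partition contradicts optimality. Your explicit restriction to interior edges, which guarantees the two-sided adjacency needed for the merge, is the right caveat and the only delicate point.
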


For instance, an edge of two non-consecutive vertices of the convex hull is a dominated edge.
The points of the convex hull locally satisfy their convexity constraint with the edges of the convex hull which are mandatory. So the union between two faces that share an edge between two vertices of the convex hull is always convex and a solution without this edge is always better (see Fig. \ref{dom_edge}).

\begin{figure}
\centering
      \includegraphics[scale=0.8]{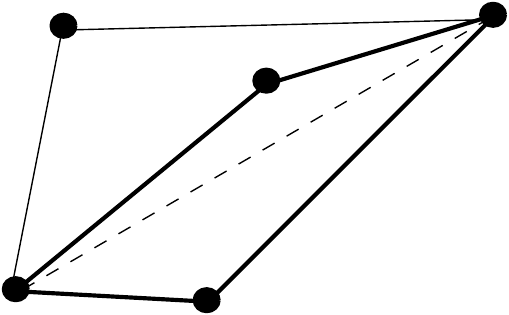}
   \caption{From example of Fig. \ref{enum_conv}, the dashed edge is dominated, as well as the two triangle faces that contain it}
   \label{dom_edge}
\end{figure}


 



\subsection{Generation of the convex faces set}
\label{generationoffaces}
In section \ref{formulation}, our integer linear program requires the computation of $F$ \emph{i.e} the enumeration of the possible faces for $P$. See Fig.~\ref{enum_conv} for an example of set $F$ ($n = 5$). We rely on the article of Dobkin and al. \cite{DobkinEO90} which describes an algorithm to find the set of empty convex $r$-gons ($r$ is the number of points of the polygon). The principle of the algorithm is as follows. For each point $p \in P$, a star-shaped polygon is formed by sorting the vertices to the left of $p$ by angle around $p$. We then compute the visibility graph of this polygon, including the edges of the polygon, but removing the edges containing $p$. We finally compute the set of convex chains in the visibility graph (in \cite{DobkinEO90} the algorithm lists the convex chains of r-2 edges, but here we want all the faces of any size). For each chain, by adding the two edges connecting the ends with $p$, we obtain a face. 

With the properties seen in this section, we can preprocess the $F$ set to eliminate a subset of dominated faces, we call this subset $F^{P\!P} \subseteq F$ (PP for PreProcessing).
In practice, filtering all the dominated faces can be very expensive. We make a preprocessing to filter a part of the dominated faces in a polynomial time.

In order to filter the dominated faces, we use the following algorithm:
\begin{algorithm}
\caption{Faces pre-processing \label{AlgoFPP}}
\begin{algorithmic}[1]
\STATE $F^{P\!P} \leftarrow F$
\FOR{each edge $\{i,j\}$} 
\IF{$\min(|R(i,j)|, |L(i,j)|) \leq 3$} \label{limit3}
\IF {the union of any two faces of $R(i,j) \times L(i,j)$ is convex}
\STATE Remove all faces using edge $\{i,j\}$ from $F^{P\!P}$.
\ENDIF
\ENDIF
\ENDFOR
\FOR{each face $f \in F^{P\!P}$ and each edge $\{i,j\}$ of $E_f$} 
\IF{the union of $f$ with all its adjacent faces along $\{i,j\}$ on the other side (if $f \in R(i,j)$, all faces of $L(i,j)$, otherwise all faces of $R(i,j)$) is convex}
\STATE Remove the face $f$ from $F^{P\!P}$.
\ENDIF
\ENDFOR
\FOR{each edge $\{i,j\}$ s.t $i$ and $j$ are non-consecutive vertices of the convex hull $H(P)$}
\STATE Remove all faces using edge $\{i,j\}$ from $F^{P\!P}$.
\ENDFOR
\end{algorithmic}
\end{algorithm}

Note that the condition at line \ref{limit3} deliberately limits the convexity verification of the union of $R(i,j) \times L(i,j)$ faces to cases where $|R(i,j)|$ or $|R(i,j)|$ is less than three for efficiency reasons. Note also that we are only deleting faces. Indeed, the starting set $F$ is made by definition of all the possible faces so when we check if the union of two faces is convex, the resulting face of this union is already in $F$.
\section{An IP formulation based on convex polygons}

\subsection{Formulation}
\label{formulation}
Recall that $F$ denotes the set of all convex faces. Consider a variable $x_f \in \{0,1\}$ for each face $f\in F$ that is set to 1 if $f$ is in the convex partition, and 0 otherwise. 


We propose the following Integer Programming model:

\begin{center} \begin{tabular}{cc}  & $
\begin{array}{llllllr|}
&\textrm{minimize } z =& \sum_{f \in F} x_f & & (1)\\
        &  (M) & \sum_{f \in L(i,j)} x_f - \sum_{f \in R(i,j)} x_{f} = & \left\{
    \begin{array}{ll}
       1  & \forall $\{i,j\}$ \in H(P), \:L(i,j) \neq \emptyset \\
        -1  & \forall $\{i,j\}$ \in H(P), \: R(i,j) \neq \emptyset \\
        0 & \forall $\{i,j\}$ \in E \setminus H(P)
    \end{array}
\right. & (2) \\
      &  & x_f \in \{0, 1\} & \forall f \in F & (3) \\
       
\end{array} $\\ 
& \\
\end{tabular} \end{center}

The objective function (1) minimize the total number of faces chosen for the solution. Constraints (2) ensure that for each edge of $H(P)$, there is a face chosen at the left or right, depending on the orientation of the edge, in the solution. It also ensures that for all remaining internal edges, the number of faces chosen on its right is equal to the number of faces chosen on its left (this number will be either 0 or 1). The constraint is very similar to a flow or path conservation in a network and can be understood as a \emph{conservation of faces} \emph{i.e} that whenever a face is used on one side of an edge, another face must match the other side (to the exception of hull which acts as a \emph{source} or \emph{sink} of faces). It enforces the faces chosen to tile the interior of $H(P)$. Note that the model is however not a network flow model.

Although this formulation is correct as it is, it can be strengthened by enforcing the minimum number of convex faces, denoted $d(i)$ that are adjacent to each point $i \in P$ (the degree of a vertex). Note $d(i) \geq 2$ using the reasonings presented Section \ref{genericcut}:
$$\sum_{f| i \in V_{f}} x_f \ge d(i) \quad \forall i \in P \qquad (4)$$
Finally note that the linear relaxation consists in relaxing constraint $(3)$ into $x_f \geq 0$ and the optimal value is denoted $z^*_{LP}$. \\

\noindent \textbf{Cutting planes.} Observe that a feasible integer solution of $(M)$ is an independent set in the intersection graph of the convex faces. More precisely, let $I = (V_F, E_F)$ be an undirected graph where each vertex $v_f \in V_F$ is associated to a convex face of $f \in F$ and an edge $(v_{a}, v_{b})$ is added to $E_F$ if faces $a$ and $b$ intersect. Any feasible solution to $(M)$ is an independent set of $I$ although the converse is not true. As a result, a number of valid inequalities known for independent set can be used to strengthen $(M)$. Since the size of $I$ is exponential in $n$, such inequalities must be added as cutting planes by solving a separation problem in the solution of the linear relaxation. We considered two classes of such inequalities, the clique and the odd cycle inequalities. Let $C \subseteq V_F$ be a clique of $I$ \emph{i.e} a set of two by two intersecting faces. Since only one face of $C$ can be chosen in a solution, the following cutting plane can be added:
$$\sum_{f \in C} x_f \le 1$$
Let $O \subseteq V_F$ be an odd cycle of $I$, the following inequality holds:
$$\sum_{f \in C} x_f \le \frac{1}{2}(|O|-1) $$

\noindent \textbf{Size of the formulation and preprocessing.} The enumeration of all convex faces required by $(M)$ is discussed Section \ref{generationoffaces}. It is easy to see that the number of convex faces grows exponentially with the number of points. The worst case is hit when all points belongs to the convex hull. In this case, any subset of points define a valid convex face leading to $2^n$ convex faces. In general, $|F| \in O(2^n)$ and formulation $(M)$ have an exponential number of variables. In practice, some points lies inside the hull and not all subsets of points define a convex face. We will see Section \ref{expe} that the formulation scales very easily to practical instances of size $n=100$. After preprocessing of the faces, the formulation is stated on the set $F^{P\!P}$ and the mandatory faces $f$ (See Remark \ref{mandatory_faces}) are enforced with $x_f = 1$. The number of constraints is in $O(n^2)$ and the preprocessing also helps reducing this number in practice (See Corollary \ref{domi_edge}).




\section{Experimental results}
\label{expe}

We report the experimental evaluation of the proposed model.	Let's first give details on the hardware, software as well as the benchmark used.

\textbf{Computational Environment.} All experiments were run on a Macbook pro with 4 processors Intel Core i7 at 2.8 GHz and a limit of 4 Go of RAM. Algorithms are all implemented in Java and CPLEX 12.8.0 is used in multi-thread mode. Note that experiments of Barboza and al \cite{BarbozaSR19}, which serve as a baseline, were run in single thread mode.

\textbf{Benchmark.} Two sets of instances $T1$ and $T2$ are available from \cite{mcpp-instances-page}. $T1$ is made of 30 instances for each size $n$ in $30,32, \ldots, 50$ that is a total of $330$ instances. Each instance is generated using a uniform random distribution of the coordinates in the interval $[0,1]$. But for each size, the first 30 instances that were found optimally solvable within 20 minutes were kept. Some instances were rejected (hit the time limit) from size $n=44$ and onwards so that set $T1$ is biased to be easy enough for the model given in \cite{BarbozaSR19}. Set $T2$ is made of 30 instances for each sizes $55, 60, \ldots, 110$ (so 330 in total) and no selection was made on set $T2$.\\
A set of more \emph{structured} instances was proposed in the CG:SHOP challenge 2020  \cite{cg_shop-challenge-page, demaine2020computing}. We use instances of sizes $\leq 100$ from the image set. This leads to four instances (four images: euro-night, london, stars, us-night) for each size $n$ in $10,15,20,\ldots,100$ so a total of 56 instances. We also report some results on the 20 instances of size $100 < n \leq 300$ of the challenge\footnote{All the instances used as well as the detailed results are available on the webpage \url{https://pagesperso.g-scop.grenoble-inp.fr/~cambazah/convexpartition/minconvex.html}}.

The results are presented in two parts. Firstly, we evaluate our model by solving the integer model. Secondly, we discuss the lower bound that can be obtained for larger problem of sizes $150 < n \leq 300$ by focusing on the linear relaxation.

\subsection{Exact solving}
\begin{table}[t]
\footnotesize
\begin{tabular}{|r|r|r|r|r|r|r|r|r|r|r|r|r|r|r|r|r|}
\hline
 \# inst &	n	&	\multicolumn{2} {  c | }{Faces}	&	\multicolumn{3} {c | }{\%Gap}	&	\multicolumn{3} {  c | }{Cpu(s)}&	\#nodes \\
\hline
           & 		&	Avg $|F|$	 		& Avg $|F^{P\!P}|$					&	Med   & Avg 		&	Max		&	Med 	&	Avg	&	Max         &	Max		    \\
\hline
\hline 
330	& 	30-50	&	11417	& 	5592	&	0,00	&	0,02	&	3,85	&	0,35	&	0,41	&	1,58	&	0 \\
\hline  
30	& 	55	&	23167	&	12253	&	0,00	&	0,20	&	3,13	&	0,87	&	1,07	&	1,86	&	0,00 \\
30	& 	60	&	27974	& 	15141	&	0,00	&	0,00	&	0,00	&	1,11	&	1,35	&	2,48	&	0,00 \\
30	& 	65	&	35061	&	19201	&	0,00	&	0,17	&	2,56	&	1,79	&	1,89	&	3,00	&	0,00 \\
30	& 	70	&	40933	& 	22392	&	0,00	&	0,32	&	2,56	&	2,46	&	2,31	&	6,13	&	45,00 \\
30	& 	75	&	48998	&	27818	&	0,00	&	0,44	&	2,38	&	3,55	&	3,91	&	9,37	&	25,00 \\
30	& 	80	&	55737	& 	31606	&	0,00	&	0,41	&	2,08	&	4,09	&	3,71	&	6,61	&	0,00 \\
30	& 	85	&	66366	&	37677	&	0,00	&	0,48	&	2,17	&	5,31	&	5,45	&	13,18	&	26,00 \\
30	& 	90	&	73386	& 	42299	&	0,00	&	0,51	&	1,92	&	5,74	&	7,49	&	35,18	&	77,00 \\
30	& 	95	&	84009	&	48642	&	0,00	&	0,90	&	3,51	&	8,83	&	16,23	&	82,61	&	1913,00 \\
30	& 	100	&	95135	& 	55078	&	0,00	&	0,47	&	1,85	&	8,60	&	13,58	&	62,84	&	125,00 \\
30	& 	110	&	116568	&	69006	&	1,55	&	1,00	&	3,08	&	18,16	&	31,78	&	166,73	&	2399,00 \\
\hline
\end{tabular}
\caption{Experimental results on the entire benchmark T1 and T2 of Barboza and al \cite{BarbozaSR19}. All instances are solved to optimality. \label{barboza} }
\end{table}

Table \ref{barboza} and  \ref{cgshop}  presents the results on two benchmarks. Each line gives a summary of the results for a class of instances gathered by size $n$ and column  \emph{\# inst} gives the number of instances considered in the class. Note that all instances of size $n\leq50$ are considered in a single class (the first line). For each class of instances, we report a number of metrics. We report the average number of faces (\emph{Avg $|F|$}) as well as the number of faces after preprocessing using Algorithm \ref{AlgoFPP} (\emph{Avg $|F^{P\!P}|$}). We solve the linear relaxation $z^*_{LP}$ of our model \footnote{Note that this is the linear relaxation and not the root node lower bound that is usually much stronger after the automatic preprocessing performed by cplex.} (changing all $x_f \in \{0, 1\}$ into $x_f \geq 0$) and reports the median (\emph{Med}), average (\emph{Avg}) and maximum (\emph{Max}) gap (column \emph{\%Gap}) computed as follows: $100(z^* -  \lceil z^*_{LP} \rceil)/z^*$. We also report the median, average  and maximum solving time (\emph{Cpu(s)}) in seconds. Finally, the maximum number of nodes (\emph{\# Nodes}) explored by the branch and bound algorithm is given. 

\begin{itemize}
\item All instances of size $n \leq 50$ are easily solved optimally with a maximum time of 1,58 seconds (Table \ref{barboza}). Note the none of the instances require branching from the solver \emph{i.e} that all are solved at the root node.
\item The approach remains very efficient even for instances of size up to $n=110$ with an average time 31,78 seconds (Table \ref{barboza}). Branching is sometimes required from size $n=75$ and onwards but the quality of the linear relaxation appears to be very good with a maximum gap of 3.51\% across all instances of size $55 \leq n \leq 110$.
\item The preprocessing of faces remove around 44\% of them in average across all instances. However it seems to decrease slowly as the size increases and is around 41\% for instances of size $n=100$ (Table \ref{barboza}).
\item The results observed on the structured image instances (Table \ref{cgshop}) of the challenge are very similar and all instances are easily solved to optimality.
\item Table \ref{cgortho} gives the results for instances with numerous collinear points on vertical and horizontal lines. Although these structured instances tend to have a large number of faces, the model can solve optimally the three instances available with a size around 100. Considering the large number of faces involved, we also give the results without pre-processing of the faces to show its interest.
\end{itemize}

Even though the solving times can not be directly compared to \cite{BarbozaSR19} due to different hardware and number of threads, we believe it is a significant improvement. Some of the instances of size $\leq 50$ could require a computation time of 20 minutes and none of instances with $55 \leq n \leq 100$ can be solved exactly in \cite{BarbozaSR19} where elaborate heuristics are used instead.
\begin{table}
\footnotesize
\begin{tabular}{|r|r|r|r|r|r|r|r|r|r|r|r|r|r|r|r|r|}
\hline
 \# inst &	n	&	\multicolumn{2} {  c | }{Faces}	&	\multicolumn{3} {c | }{\%Gap}	&	\multicolumn{3} {  c | }{Cpu(s)}&	\#nodes \\
\hline 
           & 		&	Avg $|F|$	 & Avg $|F^{P\!P}|$		&	Med.   & Avg 		&	Max		&	Med 	&	Avg	&	Max         &	Max		    \\
           \hline
           \hline
           4	&	10	&	141	&	46	&	0,00	&	0,00	&	0,00	&	0,00	&	0,00	&	0,01	&	0\\
4	&	15	&	532	&	186	&	0,00	&	0,00	&	0,00	&	0,01	&	0,01	&	0,03	&	0\\
4	&	20	&	1308	&	659	&	0,00	&	0,00	&	0,00	&	0,03	&	0,04	&	0,07	&	0\\
4	&	25	&	2717	&	1276	&	0,00	&	0,00	&	0,00	&	0,06	&	0,06	&	0,06	&	0\\
4	&	30	&	3921	&	2170	&	0,00	&	0,00	&	0,00	&	0,11	&	0,11	&	0,15	&	0\\
4	&	35	&	6455	&	4047	&	0,00	&	0,00	&	0,00	&	0,15	&	0,14	&	0,16	&	0\\
4	&	40	&	11080	&	5627	&	0,00	&	0,00	&	0,00	&	0,36	&	0,38	&	0,61	&	0\\
4	&	45	&	12989	&	6702	&	1,79	&	1,82	&	3,70	&	0,42	&	0,50	&	0,86	&	0\\
4	&	50	&	20686	&	11826	&	0,00	&	0,00	&	0,00	&	1,18	&	1,42	&	2,44	&	0\\
4	&	60	&	26780	&	16091	&	0,00	&	0,00	&	0,00	&	1,77	&	1,63	&	2,25	&	0\\
4	&	70	&	39468	&	24258	&	0,00	&	0,00	&	0,00	&	1,35	&	1,58	&	2,44	&	0\\
4	&	80	&	53071	&	34618	&	1,00	&	1,04	&	2,17	&	5,41	&	5,83	&	8,86	&	0\\
4	&	90	&	76710	&	47341	&	0,00	&	0,00	&	0,00	&	7,26	&	6,40	&	7,79	&	0\\
4	&	100	&	91813	&	53815	&	0,00	&	0,43	&	1,72	&	11,53	&	17,55	&	42,53	&	63\\
           
\hline


\end{tabular}
\caption{Experimental results on the images benchmark ($n\leq 100$) of the challenge. All instances are solved to optimality. Each class is made of four instances referred to as euro-night, london, stars and us-night. \label{cgshop}}
\end{table}

\begin{table}[h]
\footnotesize
\begin{tabular}{|r|r|r|r|r|r|r|r|r|r|r|r|r|r|r|r|r|}

\hline         
name		&	n	&	$|F|$		&$|F^{P\!P}|$	& Pp(s)	& Euler	&$ \lceil z^*_{LP} \rceil$	&	$z^*$	&\%Gap	&	Cpu(s)	&\#nodes	\\
\hline
\hline
rop0000101	&	101	&	449341	&	326348	& 16,43 & 	4	&	21	&	21	&	0	&	177,14	&	10 \\
rop0000107	&	107	&	785030	&	620263	& 262,03 & 	4	&	23	&	24	&	4,17	&	1195,59	&	80 \\
rop0000122	&	122	&	703772	&	553370	& 30,68 & 	3	&	22	&	23	&	4,35	&	748,73	&	109 \\
\hline
\hline
rop0000101	&	101	&	449341	&	- 	&		0,39	&	4	&	21	&	21	&	0	&	222,53	&	14 \\
rop0000107	&	107	&	785030	&	- 	&		0,6	&	4	&	23	&	24	&	4,17	&		1389,5	&	90 \\
rop0000122	&	122	&	703772	&	- 	&	0,81	&	3	&	22	&	23	&	4,35	&	1495,05	&	67 \\
\hline
\end{tabular}
\caption{Experimental results on the instances ($ 100< n < 150$) of the challenge with preprocessing (first three lines) and without preprocessing (three last lines). The three instances are solved to optimality.\label{cgortho}}
\end{table}

We now turn our attention to larger instances and discuss the possibility to use model $(M)$ to produce a lower bound.

\subsection{Lower bounds}
We now consider the linear relaxation $z^*_{LP}$ of model $(M)$ on larger instances. Table \ref{cglargeshop} reports the results obtained on all instances of the challenge \cite{cg_shop-challenge-page} of sizes $150 < n \leq 300$\footnote{
us-night-0000200,
paris\-0000200,
stars\-0000200,
uniform\-0000200\-1,
uniform\-0000200\-2,
euro\-night\-0000200,
ortho\_rect\_union\_170,
ortho\_rect\_union\_186,
ortho\_rect\_union\_199,
ortho\_rect\_union\_208,
rop0000262,
us-night\-0000300,
paris\-0000300,
stars\-0000300,
uniform\-0000300-1,
uniform\-0000300-2,
euro\-night\-0000300}. 
For each instance, the table gives the best known upper bound obtained during the challenge (\emph{UB}), the number of convex faces remaining after preprocessing ($|F^{P\!P}|$), the percentage of reduction compared to the initial number of faces (\emph{\%RF}) computed as $100(1-\frac{|F^{P\!P}|}{|F|})$, the percentage of reduction of the number of edges (\emph{\%RE}) computed as $100(1-\frac{|E^{P\!P}|}{n(n-1)/2})$ the cpu time in seconds required for pre-processing (\emph{Pp(s)}), the value of the Euler lower bound (\emph{Euler}), the value of the linear relaxation of the proposed model ($\lceil z^*_{LP} \rceil$), the gap to the best known upper bound (\emph{\%Gap}) computed as $100(U\!B -  \lceil z^*_{LP} \rceil)/UB$, the time for computing the relaxation (\emph{Slv(s)}) in seconds and the total time in seconds (\emph{Tot(s)}).

\begin{itemize}
\item Despite the large number of convex faces, the model scales much better than expected and solving the linear relaxation itself is faster than the preprocessing of faces. 
\item The number of convex faces is considerably larger for the instances referred to as \emph{ortho} or \emph{rop} where the points of $P$ are collinear on vertical and horizontal lines. Note that Euler's formula lead to a very weak bound in this latter case. Note also that more than a third of the edges are often removed by preprocessing.
\item The lower bound computed improves significantly the bound provided by the Euler's formula with gaps less than $3\%$ for the image benchmark and gaps less than $8\%$ for the orthogonal benchmark.
\end{itemize}

Experiments using the cutting planes mentioned remained unconclusive so far and are not reported in details in the present paper. The cutting planes seem to provide small improvements of the bound and the separation routine is computationally expensive.

\begin{table}
\footnotesize
\begin{tabular}{|r|r|r|r|r|r|r|r|r|r|r|r|r|r|r|r|r|r|}
\hline
   &	 &  &	\multicolumn{4} {  c | }{Face generation}	&          &	\multicolumn{4} {c | }{Linear relaxation}	 \\
\hline
   name         & n	   &	UB  & 	$|F^{P\!P}|$	& \%RF	& \%RE & Pp(s)    &Euler  & $ \lceil z^*_{LP} \rceil$ & \%Gap	&	 Slv(s) & Tot(s)  \\
\hline
\hline
us	&	200	&	107	&	298473	&	37,2	&	36,1	&	4,5	&	97	&	105	&	1,87	&	19,7	&	25,4 \\
paris	&	200	&	110	&	259023	&	36,6	&	36,4	&	5,3	&	99	&	108	&	1,82	&	15,9	&	22,2 \\
stars	&	200	&	110	&	285256	&	37,5	&	35,5	&	1,6	&	100	&	109	&	0,91	&	23,4	&	26,0 \\
unif-1	&	200	&	112	&	271086	&	36,9	&	36,4	&	2,3	&	100	&	110	&	1,79	&	16,5	&	19,8 \\
unif-2	&	200	&	111	&	280183	&	37,7	&	36,9	&	3,0	&	98	&	109	&	1,80	&	16,1	&	21,8 \\
euro	&	200	&	110	&	263472	&	35,5	&	35,8	&	2,2	&	98	&	108	&	1,82	&	16,5	&	19,6 \\
ortho	&	170	&	62	&	404356	&	30,1	&	35,6	&	5,4	&	10	&	59	&	4,84	&	10,8	&	17,7 \\
ortho	&	186	&	65	&	469839	&	28,3	&	33,9	&	4,1	&	11	&	60	&	7,69	&	20,3	&	26,0 \\
ortho	&	199	&	71	&	527131	&	30,7	&	34,6	&	11,6	&	11	&	66	&	7,04	&	22,7	&	36,3 \\
ortho	&	208	&	74	&	703891	&	24,4	&	34,4	&	8,3	&	10	&	69	&	6,76	&	25,2	&	36,3 \\
rop	&	262	&	41	&	3225861	&	15,9	&	22,2	&	211,4	&	3	&	40	&	2,44	&	342,3	&	574,6 \\
us	&	300	&	160	&	760641	&	34,2	&	35,4	&	4,9	&	149	&	158	&	1,25	&	88,1	&	95,7 \\
paris	&	300	&	161	&	651331	&	34,8	&	36,3	&	4,7	&	149	&	157	&	2,48	&	82,9	&	90,0 \\
stars	&	300	&	163	&	760654	&	33,8	&	35,4	&	6,6	&	149	&	160	&	1,84	&	101,3	&	110,7 \\
unif-1	&	300	&	161	&	654205	&	36,9	&	36,6	&	7,1	&	143	&	158	&	1,86	&	81,8	&	91,8 \\
unif-2	&	300	&	167	&	640052	&	36,2	&	37,0	&	4,6	&	155	&	163	&	2,40	&	74,4	&	81,5 \\
euro	&	300	&	163	&	776699	&	33,6	&	35,7	&	6,4	&	147	&	158	&	3,07	&	84,8	&	94,0 \\
\hline
\hline
\end{tabular}
\caption{Lower bounds ($ \lceil z^*_{LP} \rceil$) computed on all instances of the challenge with ($ 150 < n \leq 300$). \label{cglargeshop}}
\end{table}


\section{Conclusion and future work}
\label{conclu}

We propose a new formulation in integer linear programming for the minimum convex partition problem. It proves able to solve to optimality all instances of less than a 100 of points proposed so far in the literature, considerably improving the formulation given by \cite{BarbozaSR19}. Despite the exponential number of variables involved, its linear relaxation can be solved efficiently for instances of size up to 300 points providing strong lower bounds.

A first direction of research is to investigate further the cutting planes that could be used from the independent set formulation of the problem since a large family of such inequalities are already known. More interestingly, cutting planes can also be derived from geometric statements. For instance, the result proposed in Lemma \ref{cut_gene} directly lead to a cutting plane but require to study a separation algorithm. A second direction of research is to propose an implicit enumeration of the convex faces focusing on the faces with negative reduced cost only as opposed to generate and preprocess the entire set of faces. In other words, the linear relaxation of $(M)$ could be solved using a column generation procedure.



\bibliography{minConvPart}

\end{document}